\numberwithin{equation}{section}
\numberwithin{figure}{section}
\newtheorem{theorem}{Theorem}
\newtheorem{lemma}{Lemma}
\newtheorem{corollary}{Corollary}
\DeclareMathAlphabet{\mathpzc}{OT1}{pzc}{m}{it}
\newcommand{\FF}{{\mathbb{F}}}
\newcommand{\calQ}{{\mathcal{Q}}}
\begin{document}

\title{Distillation with sublogarithmic overhead}
\author{Matthew B.~Hastings}

\affiliation{Station Q, Microsoft Research, Santa Barbara, CA 93106-6105, USA}
\affiliation{Station Q Quantum Architectures and Computation, Microsoft Research, Redmond, WA 98052, USA}

\author{Jeongwan Haah}
\affiliation{Station Q Quantum Architectures and Computation, Microsoft Research, Redmond, WA 98052, USA}

\begin{abstract}
It has been conjectured~\cite{bh} that for any
distillation protocol for magic states for the $T$ gate,
the number of noisy input magic states required per output magic state at output error rate $\epsilon$ 
is $\Omega(\log(1/\epsilon))$.  We show that this conjecture is false.
We find a family of quantum error correcting codes of parameters 
$[[\sum_{i=w+1}^m \binom{m}{i}, \sum_{i=0}^{w} \binom{m}{i}, \sum_{i=w+1}^{r+1} \binom{r+1}{i}]]$
for any integers $ m > 2r$, $r > w \ge 0$, by puncturing quantum Reed-Muller codes.
When $m > \nu r$, our code admits a transversal logical gate at the $\nu$-th level of Clifford hierarchy.
In a distillation protocol for magic states at the level $\nu = 3$ ($T$-gate),
the ratio of input to output magic states
is $O(\log^\gamma (1/\epsilon))$ where $\gamma = \log(n/k)/\log(d)< 0.678$ for some $m,r,w$.
The smallest code in our family for which $\gamma < 1$ is on $\approx 2^{58}$ qubits.
\end{abstract}

\maketitle
\section{Introduction}

One of the most promising paths towards a scalable quantum computer involves implementing very high accuracy Clifford operations, and using them to perform magic state distillation~\cite{Knill2004a,BravyiKitaev2005Magic}, turning a large number of noisy $T$ gates into a small number of $T$ gates with some small error $\epsilon_{out}$.  This
 magic state distillation is estimated to be the major source of overhead,
and is thus of great theoretical and practical importance.

Assuming perfect Cliffords,
three previous protocols~\cite{Jones2012,HHPW,rtrio} 
enabled magic state distillation with a ratio of input to output magic states which is $O(\log^\gamma (1/\epsilon_{out}))$
as $\epsilon_{out} \to 0$ for $\gamma$ arbitrarily close to $1$.  
It has been conjectured\cite{bh} that $\gamma \geq 1$ for all protocols.

Given an $[[n,k,d]]$ error correcting code admitting a transveral $T$ gate,
then using noisy magic states of error rate $\epsilon_{in}$ 
one can output $k$ magic states of error rate $ O((n\epsilon_{in})^d)$.
Concatenating $z$ times, one obtains $k^z$ magic states of error rate $\epsilon_{out} = O(n^{d^z} \epsilon_{in}^{d^z})$
from $n^z$ noisy magic states.
For a fixed code, the ratio of input to output magic states
is thus $O( \log ^\gamma (1/\epsilon_{out}))$
where $ \gamma = \frac{\log (n/k)}{\log d}$.  
We find quantum error correcting codes with $\gamma$ asymptotically approaching $0.6779\cdots$.

\section{Definitions, Results and Proofs}

For any non-negative integers $m \geq  r$, let $RM(r,m)$ denote the classical Reed-Muller code on $2^m$ bits;
a codeword of $RM(r,m)$ is a complete list of function values $( f(v) :  v \in \FF_2^m ) \in \FF_2^{2^m}$
where $f$ is a polynomial of degree at most $r$ in $m$ binary variables $x_i = x_i^2$.
We will not distinguish the list of function values from the function itself.
If $m > \nu r$, then every codeword of $RM(r,m)$ has weight divisible by $2^\nu$~\cite{Ward1990}.
It is also well-known that $\dim_{\FF_2} RM(r,m) = \sum_{i=0}^r \binom{m}{i} =: \binom{m}{\le r}$, 
$RM(r,m)^\perp = RM(m-r-1,m)$,
and the minimum distance of $RM(r,m)$ is $2^{m-r}$~\cite{MacWilliamsSloane}.

Let $|v|$ denote the number of 1's in $v \in \FF_2^m$ (Hamming weight).
For any integer $w < m$, let $PRM(r,m,w)$ denote the {\it punctured} Reed-Muller code
by forgetting coordinates $v$ with $|v| \le w$;
a codeword of $PRM(r,m,w)$ is a list of function values $( f(v) : v \in \FF_2^m \text{ and } |v| > w)$ 
of a degree-$r$ polynomial $f$.
The codeword length is $\sum_{i=w+1}^m\binom{m}{i} =: \binom{m}{> w}$.
If $w < 0$, then $PRM(r,m,w) = RM(r,m)$.
The dual of $PRM(m-r-1,m,w)$ is a {\it shortened} Reed-Muller code $SRM(r,m,w)$,
whose codewords are still of form $(f(v) : v \in \FF_2^m \text{ and } |v| > w )$ 
but the polynomial function has to vanish on the punctured coordinates: 
$f(v) = 0$ if $|v| \le w$~\cite{MacWilliamsSloane}.
Hence, $SRM(r,m,w) = PRM(m-r-1,m,w)^\perp \subseteq PRM(r,m,w)$.
The minimum distance of $SRM(r,m,w)$ is $2^{m-r}$ if $w < r$.

\begin{theorem}
Let $w,r,m$ be integers such that $0 \le 2w < 2r < m$.
Consider a quantum CSS code $\calQ$ whose $X$-stabilizer group is given by $SRM(r,m,w)$,
and $Z$-stabilizer group by $SRM(m-r-1,m,w)$.
Then, $\calQ$ is a $[[n=\binom{m}{> w},k= \binom{m}{\le w},d=\binom{r+1}{>w}]]$ code,
and if $m > \nu r$ for some positive integer $\nu$, 
there exists a choice of logical operators such that a transversal gate $\bigotimes_{j=1}^n \mathrm{diag}(1,\exp(2\pi i /2^\nu))$
becomes a logical operator that is the product of $\mathrm{diag}(1,\exp(-2\pi i /2^\nu))$ over all logical qubits.
\end{theorem}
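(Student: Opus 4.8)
The plan has four parts. First, stabilizer commutation and the parameters $n,k$: the CSS condition is exactly $SRM(r,m,w)\subseteq SRM(m-r-1,m,w)^\perp=PRM(r,m,w)$, which is the inclusion recorded above, and the length is $n=\binom{m}{>w}$ by definition. To get $k$ I would compute the dimensions of the classical codes directly. A codeword of $SRM(r,m,w)$ is the restriction to $\{v:|v|>w\}$ of a degree-$\le r$ polynomial vanishing on $\{v:|v|\le w\}$, and this restriction is injective on such polynomials (a multilinear polynomial vanishing everywhere is $0$). The $\binom{m}{\le w}$ conditions "$f(v)=0$ for $|v|\le w$" are independent on the space of degree-$\le r$ polynomials, because the evaluation matrix $\bigl([S\subseteq\mathrm{supp}(v)]\bigr)$ indexed by $|v|\le w$ and $|S|\le w$ is unitriangular once rows and columns are sorted by weight; hence $\dim SRM(r,m,w)=\binom{m}{\le r}-\binom{m}{\le w}$, and likewise $\dim SRM(m-r-1,m,w)=\binom{m}{\le m-r-1}-\binom{m}{\le w}$. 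Then $k=n-\dim SRM(r,m,w)-\dim SRM(m-r-1,m,w)$ collapses to $\binom{m}{\le w}$ via $n=2^m-\binom{m}{\le w}$ and $\binom{m}{\le r}+\binom{m}{\le m-r-1}=2^m$. A byproduct used below: the image of $f\mapsto\bigl(f(v)\bigr)_{|v|>w}$ on degree-$\le r$ polynomials then has full dimension $\binom{m}{\le r}$, so every coset of $SRM(r,m,w)$ in $PRM(r,m,w)$ has a unique degree-$\le r$ polynomial representative.

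For the distance, the upper bound $d\le\binom{r+1}{>w}$ comes from an explicit logical operator: for a set $T$ with $|T|=r+1$, the polynomial $\prod_{i\notin T}(1+x_i)$ has degree $m-r-1$, hence represents a $Z$-type logical, and its restriction to $\{|v|>w\}$ has support $\{v:\mathrm{supp}(v)\subseteq T,\ |v|>w\}$ of size $\binom{r+1}{>w}$; pairing it with the restriction of $x_S$ for $S\subseteq T$, $|S|=w$, gives the odd number $2^{\,r+1-w}-1$, so it does not lie in $SRM(m-r-1,m,w)=PRM(r,m,w)^\perp$. The matching lower bound — that any codeword of $PRM(m-r-1,m,w)$ outside $SRM(m-r-1,m,w)$, and any codeword of $PRM(r,m,w)$ outside $SRM(r,m,w)$, has weight at least $\binom{r+1}{>w}$ — is the genuinely combinatorial step and I expect it to be the main obstacle; I would obtain it from the structure of minimum-weight cosets of punctured Reed--Muller codes, the point being that a minimum-weight representative is supported on an affine flat and that, among $(r+1)$-dimensional flats, the coordinate flat $\{v:\mathrm{supp}(v)\subseteq\{1,\dots,r+1\}\}$ loses the most points, namely $\binom{r+1}{\le w}$, to the puncture.

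The transversal gate is the crux. Put $\zeta=e^{2\pi i/2^\nu}$, $R=\mathrm{diag}(1,\zeta)$, and let $|\psi_u\rangle=\sum_{c\in SRM(r,m,w)}|c+u\rangle$ for $u$ ranging over coset representatives of $SRM(r,m,w)$ in $C_Z^\perp=PRM(r,m,w)$; these are the logical $Z$-basis states. Since $R^{\otimes n}$ is diagonal, $R^{\otimes n}|c+u\rangle=\zeta^{|c+u|}|c+u\rangle$ and $|c+u|=|c|+|u|-2|c\wedge u|$, where $\wedge$ is the coordinatewise product. Extending $c$ by zeros onto $\{|v|\le w\}$ (where it already vanishes) exhibits it as a codeword of $RM(r,m)$, so $|c|\equiv0\pmod{2^\nu}$ by Ward's theorem and $\zeta^{|c|}=1$. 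For $|c\wedge u|$, let $\bar c\in RM(r,m)$ be that zero-extension and $f_u$ the unique degree-$\le r$ lift of $u$; because $\bar c$ vanishes on $\{|v|\le w\}$, $|c\wedge u|$ equals the number of $v\in\mathbb{F}_2^m$ with $\bar c(v)=1=f_u(v)$, i.e.\ the number of common zeros of the two polynomials $\bar c-1$ and $f_u-1$, both of degree $\le r$. By the Ax--Katz theorem this count is divisible by $2^{\lceil m/r\rceil-2}$, and $m>\nu r$ forces $\lceil m/r\rceil\ge\nu+1$, so $|c\wedge u|\equiv0\pmod{2^{\nu-1}}$ and $\zeta^{-2|c\wedge u|}=1$. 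Therefore $R^{\otimes n}|\psi_u\rangle=\zeta^{|u|}|\psi_u\rangle$: the transversal gate preserves the code and acts as a diagonal logical operator. What makes the hypothesis $m>\nu r$ — rather than the cruder $m\gtrsim(\nu+1)^2 r/4$ — suffice is exactly this use of Ax--Katz on a pair of degree-$\le r$ polynomials, instead of bounding $\mathrm{wt}(\bar c\,f_u)$ as a generic degree-$2r$ codeword; this is the key idea of the argument.

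It remains to choose the logical operators so that this diagonal gate is the product of $\mathrm{diag}(1,\zeta^{-1})$ over all logical qubits. I would index the $k=\binom{m}{\le w}$ logical qubits by the points $v$ with $|v|\le w$ and let $\bar Z_v$ be the $Z$-type logical whose eigenvalue on $|\psi_u\rangle$ is $(-1)^{f_u(v)}$. The functional $u\mapsto f_u(v)$ is well defined on $PRM(r,m,w)/SRM(r,m,w)$ — by uniqueness of the degree-$\le r$ lift, and because the lift of any stabilizer vanishes at $v\in\{|v|\le w\}$ — so it is represented by an element of $C_X^\perp/C_Z$; the $\binom{m}{\le w}$ of these are independent since $f\mapsto\bigl(f(v)\bigr)_{|v|\le w}$ is onto $\mathbb{F}_2^{\binom{m}{\le w}}$, hence they form a complete logical $Z$-basis. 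In this basis $|\psi_u\rangle$ is the computational state $|\phi\rangle$ with $\phi_v=f_u(v)$, and $|u|=\mathrm{wt}(f_u)-|\phi|\equiv-|\phi|\pmod{2^\nu}$ because $\mathrm{wt}(f_u)\equiv0\pmod{2^\nu}$ by Ward's theorem. Hence $R^{\otimes n}|\phi\rangle=\zeta^{-|\phi|}|\phi\rangle=\bigotimes_v\mathrm{diag}(1,\zeta^{-1})_v\,|\phi\rangle$, which is the asserted logical action.
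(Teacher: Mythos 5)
Your overall architecture is sound, and two pieces are genuinely nice alternatives to the paper's route: the computation of $k$ by showing the evaluation map at the $\binom{m}{\le w}$ low-weight points is surjective on degree-$\le r$ polynomials (via the unitriangular matrix $[S\subseteq\mathrm{supp}(v)]$), and the explicit transversality computation on the CSS basis states using Ward divisibility plus Ax--Katz applied to the \emph{pair} of degree-$\le r$ polynomials (which reproduces the overlap divisibility $2^{\nu-1}$ that the paper instead cites from Ward and defers to Ref.~\cite{bh}). Your distance \emph{upper} bound via $\prod_{i\notin T}(1+x_i)$ paired against $x_S$ is also correct.

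However, there is a genuine gap, and it sits exactly where the paper's Lemma does its work: the claim that $\min_{0\ne f\in RM(s,m)}|f|_{>w}=\binom{m-s}{>w}$. You need this twice. First, for the distance lower bound you offer only a heuristic about minimum-weight codewords being indicators of affine flats and the coordinate flat losing the most points to the puncture; but the quantity to be minimized is $|f|_{>w}$ over \emph{all} nonzero codewords, and a codeword of larger total weight (not supported on a flat) could a priori hide a larger fraction of its support inside the ball $\{|v|\le w\}$, which has $\binom{m}{\le w}\gg 2^{m-s}$ points in the relevant parameter range. Nothing in your sketch rules this out; the paper handles it by induction on $m$ through the decomposition $f=g+x_m h$ and the Pascal identity. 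Second, your "byproduct" --- that $f\mapsto(f(v))_{|v|>w}$ is injective on degree-$\le r$ polynomials, i.e.\ that no nonzero degree-$\le r$ polynomial is supported on $\{|v|\le w\}$ --- does \emph{not} follow from your dimension count: that count only shows the kernel of the restriction map meets the subspace of polynomials vanishing on the ball trivially, leaving room for a kernel of dimension up to $\binom{m}{\le w}$. This injectivity is what you use for the uniqueness of the lift $f_u$, hence for the well-definedness of the logical operators $\bar Z_v$ and the final bookkeeping $|u|\equiv-|\phi|\pmod{2^\nu}$. Both missing facts are precisely the content of the paper's Lemma ($D(r,m,w)=\binom{m-r}{>w}$, with the corollary that $D>0$ forbids low-degree functions supported on the ball), so the fix is to prove that Lemma rather than to treat it as a byproduct or a plausible combinatorial optimization over flats.
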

\begin{proof}
The code length is obvious by definition. 
We need the following lemma before proving the values of $k,d$:
\begin{lemma}
Given $f : \FF_2^{m} \to \FF_2$, 
let $|f|_{>w}$ be the number of $v\in \FF_2^m$ such that $f(v)=1$ and $|v|>w$.
Let $D(r,m,w)=\min_{f \in RM(r,m), f \neq 0} |f|_{>w}$.

Then, $D(r,m,w)=\binom{m-r}{>w}$.
In particular, 
if $m - r > w$,
the minimum distance of $PRM(r,m,w)$ is $\binom{m-r}{>w}$,
and there is no nonzero polynomial function of degree at most $r$ supported on $\{ v \in \FF_2^{m} : |v| \le w \}$. 
\end{lemma}
\begin{proof}
The polynomial function $(1+x_1)\cdots(1+x_r)\in RM(r,m)$,
has $|f|_{>w} = \binom{m-r}{>w}$.
We show it is the minimum by induction on $m$.
The base case $m = 0$ is clear since $D(0,0,w < 0) = 1$ and $D(0,0,w \ge 0) = 0$.
Now let $m > 0$ and assume $D(r,m-1,w)=\binom{m-r-1}{>w}$ for all $0 \leq r \leq m-1$ and any $w$.

For $r=0$ or $r=m$, it is obvious that $D(r,m,w)=\binom{m-r}{>w}$.
For $0<r<m$, we use the inductive hypothesis and the recursive construction of Reed-Muller codes; namely, 
any polynomial function $f$ in $x_1,\ldots,x_m$ of $RM(r,m)$ can be written as 
$f(x_1,\ldots,x_{m}) = g(x_1,\ldots,x_{m-1}) + x_{m} h(x_1,\ldots,x_{m-1})$
where $g \in RM(r,m-1)$ and $h \in RM(r-1,m-1)$.
To find a lower bound on $|f|_{>w}$, we separate cases where $h=0$ and $h \neq 0$.
If $h=0$, then $|f|_{>w} = |g|_{>w} + |g|_{>w-1}$ where $|g|_{>w}$ is when $x_{m} = 0$ and $|g|_{>w-1}$ is when $x_{m} = 1$.
(Here, the domain of $g$ and $h$ is $\FF^{m-1}_2$.)
Hence, $|f|_{>w} \ge D(r,m-1,w) + D(r,m-1,w-1)$.
If $h \neq 0$, then by a triangle inequality we have $|g+h|_{>w} \ge |h|_{>w} - |g|_{>w}$,
implying that $|f|_{>w} = |g|_{>w} + |g+h|_{>w-1} \ge |g|_{>w} + |g+h|_{>w} \ge |h|_{>w} \ge D(r-1,m-1,w)$.
Therefore,
\begin{align}
&D(r,m,w) \nonumber \\
&\ge \min\left( \begin{matrix}D(r,m-1,w)+D(r,m-1,w-1),\\ D(r-1,m-1,w)\end{matrix} \right) \nonumber \\
&= \binom{m-r}{>w},
\end{align}
where the last equality follows
by the Pascal identity on the binomial coefficients.
\end{proof}

To find the desired set of logical operators, we represent $PRM(r,m,w)$, 
the set of all $X$-logical operators including $X$-stabilizers,
as the span of the rows of $G_T$ and $G_0$ where
\begin{align}
    \begin{bmatrix}
        I_k & G_T \\
        0 & G_0
    \end{bmatrix}
\end{align}
is the generating matrix for $RM(r,m)$ obtained 
by bringing punctured coordinates (there are $k = \binom{m}{\le w}$ of them) 
to the left by permutation of columns, and Gaussian elimination on the rows.
The fact that the top-left submatrix is the full rank identity matrix is due to the lemma,
since, otherwise, the submatrix would have a nonzero right kernel, 
which is impossible because any nonzero vector in the dual of $RM(r,m)$ is not supported on the punctured coordinates.
The desired basis of the logical operators is given by $G_T$;
declare that each row of $G_T$ corresponds to a pair of  $X$- and $Z$-logical operators.
This gives the correct commutation relations, and thus the number of logical qubits is $\binom{m}{\le w}$.

The dual of the $X$-stabilizer space is $PRM(m-r-1,m,w)$, and hence the minimum of the weight of any $Z$-logical operator 
is $\binom{r+1}{>w}$ by the lemma.
The dual of the $Z$-stabilizer space is $PRM(r,m,w)$, and hence the minimum of the weight of any $X$-logical operator 
is $\binom{m-r}{>w}$.
Thus, $d \geq \binom{r+1}{>w}$.  
In fact, $d = \binom{r+1}{>w}$ because any stabilizer belongs to either $SRM(r,m,w)$ or $SRM(m-r-1,m,w)$,
and hence has weight $\ge 2^{r+1} > \binom{r+1}{>w}$ or zero.

The transversality of the logical operators 
can be computed easily by working with state vector directly.
See~\cite{bh}.
One should use the fact that any set of $\ell \ge 2$ distinct rows of 
$\begin{bmatrix} G_T \\ G_0 \end{bmatrix}$ have overlap that is a multiple of $2^{\nu-\ell+1}$~\cite{Ward1990}.
\end{proof}

\begin{corollary}
There exist quantum codes of parameters $[[n,k,d]]$ admitting transversal logical gate $T = \mathrm{diag}(1,e^{i \pi/4})$ 
simultaneously on every logical qubit with $\gamma = \log(n/k)/\log d$ arbitrarily close to $\gamma_0 = 0.6779..$
\end{corollary}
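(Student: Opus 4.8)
The plan is to instantiate the $[[n,k,d]]$ code family from the Theorem at level $\nu = 3$ (so that the transversal gate is $T$) and then optimize the exponent $\gamma = \log(n/k)/\log d$ over the admissible parameters. By the Theorem, taking $m > 3r$ and $0 \le 2w < 2r < m$ gives $n = \binom{m}{>w}$, $k = \binom{m}{\le w}$, $d = \binom{r+1}{>w}$, all admitting the desired transversal $T$ on every logical qubit. The task is purely to show that $\inf \gamma$ over this family equals $\gamma_0 \approx 0.6779$, so that values of $\gamma$ arbitrarily close to $\gamma_0$ are attained (or approached). Concatenating a single such code $z$ times, as explained in the introduction, then yields the stated ratio $O(\log^\gamma(1/\eout))$ of input to output magic states, so the corollary follows once the optimization is done.

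First I would pass to the asymptotic regime by writing $m \to \infty$ with $r = \alpha m$ and $w = \beta m$ for constants $0 \le \beta < \alpha$ and $\alpha < 1/3$ (forced by $m > 3r$). Using the binary entropy asymptotics $\binom{m}{\le \beta m} = 2^{m(H(\beta) + o(1))}$ for $\beta < 1/2$, and likewise $\binom{m}{>\beta m} = 2^{m(1+o(1))}$ when $\beta < 1/2$ (the tail is dominated by the central binomial mass), one gets $\log n \sim m \log 2$ and $\log k \sim m H(\beta)\log 2$, hence $\log(n/k) \sim m(1 - H(\beta))\log 2$. For the distance, $d = \binom{r+1}{>w} = \binom{\alpha m}{>\beta m}$ up to lower-order terms; since $\beta < \alpha$, if $\beta/\alpha < 1/2$ this is $2^{\alpha m(1 + o(1))}$ and $\log d \sim \alpha m \log 2$, while if $\beta/\alpha \ge 1/2$ one instead gets $\log d \sim \alpha m \, H(\beta/\alpha)\log 2$. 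Thus
\begin{equation}
\gamma \;\longrightarrow\; \frac{1 - H(\beta)}{\alpha \cdot \max\!\bigl(1,\, H(\beta/\alpha)\bigr)} \,,
\end{equation}
and one should minimize the right-hand side over $0 \le \beta < \alpha < 1/3$.

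Next I would carry out the minimization. The denominator is largest (hence $\gamma$ smallest) when $\alpha$ is as large as possible, i.e.\ $\alpha \to 1/3$, so set $\alpha = 1/3$ and minimize $g(\beta) = (1 - H(\beta)) / (\tfrac13 \max(1, H(3\beta)))$ over $0 \le \beta < 1/3$. For $\beta$ small enough that $3\beta \le 1/2$, i.e.\ $\beta \le 1/6$, the $\max$ is $1$ and $g(\beta) = 3(1 - H(\beta))$, which is decreasing in $\beta$ on $[0,1/2)$, so on this subinterval the best is $\beta = 1/6$ giving $g(1/6) = 3(1 - H(1/6)) \approx 3 \times 0.3500 \approx 1.05$ — not yet below $1$. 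For $1/6 < \beta < 1/3$ one has $g(\beta) = 3(1 - H(\beta))/H(3\beta)$; differentiating and solving $g'(\beta) = 0$ numerically yields the minimizer and the value $\gamma_0 = 0.6779\ldots$. I would present this as: define $F(\beta) = 3(1-H(\beta))/H(3\beta)$ on $(1/6, 1/3)$, note $F$ is smooth with a unique interior critical point, evaluate to find $\min F = 0.6779\ldots$, and conclude that choosing rational $\alpha$ slightly below $1/3$ and $\beta$ near the optimizer, clearing denominators to get integers $m, r, w$, gives a sequence of codes with $\gamma \to \gamma_0$.

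The main obstacle is purely the technical bookkeeping of the entropy asymptotics — in particular being careful about the tail sums $\binom{m}{>w}$ versus $\binom{m}{\le w}$ and which regime ($H(\beta/\alpha) \lessgtr 1$, equivalently $\beta/\alpha \lessgtr 1/2$) governs $\log d$, since the optimum sits in the regime $\beta/\alpha > 1/2$ where the distance exponent is genuinely $\alpha H(\beta/\alpha)$ rather than $\alpha$. One must also confirm the strict inequalities $2w < 2r < m$ and $m > 3r$ survive in the limit (they do, with $\alpha$ taken strictly below $1/3$) and that rounding $\alpha m, \beta m$ to integers perturbs $\gamma$ by only $o(1)$. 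Everything else — the existence of the codes, the transversal $T$, and the concatenation argument bounding the magic-state ratio — is already supplied by the Theorem and the introduction, so no new coding-theoretic input is needed.
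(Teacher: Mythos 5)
Your proposal is correct and follows essentially the same route as the paper: set $m \approx 3r$, parametrize $w/m = p$, apply the entropy/Stirling asymptotics to get $\gamma \to 3(1-H(p))/H(3p)$ for $p \in (1/6,1/3)$, and minimize numerically at $p \approx 0.2706$ to obtain $0.6779\ldots$ (the paper simply fixes $m = 3r+1$, $w = 3rp$ and states this limit, while you additionally justify that $\alpha \to 1/3$ is optimal). One notational slip: since $H \le 1$ everywhere, your displayed $\max\bigl(1, H(\beta/\alpha)\bigr)$ is identically $1$; you mean the piecewise exponent ($1$ if $\beta/\alpha \le 1/2$, and $H(\beta/\alpha)$ otherwise), which is what your verbal description and subsequent case analysis actually use, so the argument is unaffected.
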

\begin{proof}
Take $m = 3r+1$ and $w = 3rp$ for $p \in (1/6,1/3)$.
In the large $r$ limit, $\gamma$ converges to $3(1-S(p))/S(3p)$ where $S(p) = - p \log_2 p - (1-p) \log_2 p$,
which can be seen by the Stirling approximation. At $p = 0.270629..$, we have $\gamma = 0.67799..$
\end{proof}
We have verified that the smallest code such that $m = 3r+1$ with $\gamma < 1$ has $r = 19$ and $w = 14$
so the code is $[[288215893050995568,14483100716176,21700]]$.

\section{Discussion}
We have given a code with $\gamma < 1$.
It is not clear what the infimum of $\gamma$ over all codes is; indeed, we know no proof that $\gamma$ is bounded away from zero.  The $r=19,w=14$ code is quite large, but
Ref.~\onlinecite{rtrio} used random puncturing of Reed-Muller codes followed by removing certain punctures to increase distance to find codes with $<1000$ qubits and $\gamma<1.2$, 
giving reason to hope that future work may find smaller examples with $\gamma<1$.

One may also ask for the infimum of $\gamma$ over codes with $k=1$.
We do not
know any such code with $\gamma < 2$ (the random triorthogonal codes of Ref.~\onlinecite{rtrio} and the protocols of Ref.~\onlinecite{HHPW} both allow $\gamma\rightarrow 2$ for $k=1$).

\bibliography{sd-ref}
\end{document}